\newtheorem{theorem}{Theorem}
\theoremstyle{definition}
\newcommand*{\cA}{\mathcal{A}}
\newcommand*{\cE}{\mathcal{E}}
\newcommand*{\cI}{\mathcal{I}}
\newcommand*{\cM}{\mathcal{M}}
\newcommand*{\cN}{\mathcal{N}}
\newcommand{\ket}[1]{|#1\rangle}
\newcommand{\bra}[1]{\langle #1 |}
\newcommand{\braket}[2]{\langle #1 |#2 \rangle}
\newcommand{\proj}[1]{\ket{#1}\bra{#1}}
\newcommand{\beq}{\begin{equation}}
\newcommand{\eeq}{\end{equation}}
\newcommand{\best}{\begin{equation*}}
\newcommand{\eest}{\end{equation*}}
\DeclareMathOperator{\Tr}{Tr}
\newcommand{\rax}{\rho_{a|x}}
\newcommand{\sax}{\sigma_{a|x}}
\newcommand{\us}{{\textup{\tiny US}}}
\newcommand{\spann}{\operatorname{span}}
\begin{document}

\title{Einstein-Podolsky-Rosen steering provides the advantage  
in entanglement-assisted subchannel discrimination with one-way measurements}
\author{Marco Piani}
\affiliation{Institute for Quantum Computing \& Department of Physics
  and Astronomy, University of Waterloo, 200 University Avenue West, 
  Waterloo, Ontario N2L 3G1, Canada}

\author{John Watrous}
\affiliation{Institute for Quantum Computing \& School of Computer
  Science,
  University of Waterloo, 200 University Avenue West, 
  Waterloo, Ontario N2L 3G1, Canada}

\pacs{03.67.Mn, 03.67.Bg, 03.65.Ud}

\begin{abstract}
  Steering is the entanglement-based quantum effect that embodies the 
  ``spooky action at a distance'' disliked by Einstein and scrutinized by
  Einstein, Podolsky, and Rosen. 
  Here we provide a necessary and sufficient characterization of steering,
  based on a quantum information processing task: the discrimination of
  branches in a quantum evolution, which we dub
  \emph{subchannel discrimination}.
  We prove that, for any bipartite steerable state, there are instances of the
  quantum subchannel discrimination problem for which this state allows a
  correct discrimination with strictly higher probability than in absence of
  entanglement, even when measurements are restricted to local measurements
  aided by one-way communication. 
  On the other hand, unsteerable states are useless in such conditions, even
  when entangled.
  We also prove that the above steering advantage can be exactly quantified in
  terms of the \emph{steering robustness}, which is a natural measure of the 
  steerability exhibited by the state.
\end{abstract}

\maketitle

Entanglement is a property of distributed quantum systems that does not have a
classical counterpart~\cite{revent}.
On one hand, entanglement challenges our classical, everyday-life intuition
about the physical world; on the other hand, it is the key element in many
quantum information processing tasks~\cite{nielsen2010quantum}.
The strongest feature that entangled systems can exhibit is
non-locality~\cite{revnonloc}.
A weaker feature related to entanglement is \emph{steering}: roughly speaking,
it corresponds to the fact that one party can induce very different ensembles
for the local state of the other party, beyond what is possible based only on a
conceivable classical knowledge about the other party's ``hidden
state''~\cite{wisemanPRL2007,jonesPRA2007}.
Steering embodies the ``spooky action at a distance''---in the words of
Einstein~\cite{einsteinborn}---identified by Schroedinger~\cite{schroedinger},
scrutinized by Einstein, Podolsky, and Rosen~\cite{einstein1935can}, and
formally put on sound ground in~\cite{wisemanPRL2007,jonesPRA2007}.

Not all entangled states are steerable, and not all steerable states exhibit
nonlocality~\cite{wisemanPRL2007,jonesPRA2007}, but states that exhibit
steering allow for the verification of their entanglement in a
\emph{semi-}device independent way: there is no need to trust the devices used
by the steering party, and the ability to determine
the conditional states of the steered party is
sufficient~\cite{wisemanPRL2007,jonesPRA2007,entverification}.
In general, besides its foundational interest, steering is interesting in
practice in bipartite tasks, like quantum key distribution (QKD)~\cite{QKD},
where it is convenient and/or appropriate to trust the devices of one of two
parties, but not necessarily of the other party.
For example, by exploiting steering it is possible to obtain key rates
unachievable in a full device-independent approach~\cite{acin2007_DIQKD}, but
still assuming less about the devices than in a standard QKD
approach~\cite{branciard2012}.
For these reasons, steering has attracted a lot of interest in recent times,
both theoretically and experimentally
~\cite{ecavalcanti2009,sunders2010,ou1992,bowen2003,smith2012conclusive,
ajbennet2012,handchen2012observation,Steinlechner2013,wittmann2012loophole,
bowles2014one,pusey2013,quantifyingsteering,jevtic2013quantum,milne2014quantum,moroder2014steering,schneeloch2013einstein,schneeloch2013violation,schneeloch2014improving},
mostly directed to the verification of steering.
Nonetheless, an answer to the question ``What is steering useful for?''\
that applies to states that exhibit steering can arguably be considered
limited~\cite{branciard2012,entverification}.
Furthermore, the quantification
of steering has just started to be addressed~\cite{quantifyingsteering}.

In this Letter we fully characterize and quantify steering in an operational
way that nicely matches the asymmetric features of steering, and that breaks
new ground in the investigation of the usefulness of steering.
We prove that every steerable state is a resource in a quantum information task
that we dub \emph{subchannel discrimination}, in a practically relevant
scenario where measurements can only be performed locally. 

Subchannel discrimination is the identification of which branch of a quantum
evolution a quantum system undergoes (see Fig.~\ref{fig:subchannel}). 
It is well known that entanglement between a probe and an ancilla can help in
discriminating different
channels~\cite{Kitaev97,Paulsen02,ChildsPR00,D'ArianoPP01,Acin01,
  GiovannettiLM04,GilchristLN05,RosgenW05,Sacchi05,Sacchi05b,Lloyd08}.
In~\cite{PW09} it was proven that actually \emph{every} entangled state is
useful in some instance of the subchannel discrimination
problem. 
Ref.~\cite{MPW10} raised and analyzed the question of whether such an advantage
is preserved when joint measurements on the output probe and the ancilla are
not possible. 
Here we prove that, when only local measurements coordinated by forward
classical communication are possible, \emph{every} steerable state remains
useful, while non-steerable entangled states become useless. 
We further prove that this usefulness, optimized over all instances of the
subchannel discrimination problem, is exactly equal to the 
\emph{robustness of steering}---a natural way of quantifying steering using
techniques similar to the ones used in~\cite{quantifyingsteering}, but based on
the notion of robustness~\cite{robustness,generalizedrobustnesssteiner,
  generalizedrobustnessharrow,geller2014quantifying}. 
We argue that the resulting quantification of steering, besides having
operational interpretations both in terms of resilience to noise and
usefulness, is quantitatively more detailed.
  
\emph{Preliminaries: entanglement and steering.---} 
In the following we will denote by a $\,\hat{}$ (hat) mathematical entities
that are ``normalized.''
So, for example, a positive semidefinite operator with unit trace is a
(normalized) state $\hat{\rho}$.
An \emph{ensemble} $\cE=\{\rho_a\}_a$ for a state $\hat{\rho}$ is a 
collection of \emph{sub}states $\rho_a\leq \hat{\rho}$ such that
$\sum_{a}\rho_a=\hat{\rho}$.
Each substate $\rho_a$ can be seen as being proportional to a normalized state
$\hat{\rho}_a$, $\rho_a=p_a\hat{\rho}_a$, with $p_a=\Tr(\rho_a)$ being the
probability of $\hat{\rho}_a$ in the ensemble.
An \emph{assemblage} $\cA=\{\cE_x\}_x=\{\rax\}_{a,x}$ is a collection
of ensembles $\cE_x$ for the same state $\hat{\rho}$, one for each $x$, i.e.,
$\sum_a\rax=\hat{\rho}$, for all $x$.
For example, $\cE=\{\frac{1}{2}\proj{0},\frac{1}{2}\proj{1}\}$ and
$\cE'=\{\frac{1}{2}\proj{+},\frac{1}{2}\proj{-}\}$, with
$\ket{\pm}:=(\ket{0}\pm\ket{1})/\sqrt{2}$, are both ensembles for the maximally
mixed state $\openone/2$ of a qubit, and taken together they form an assemblage
$\cA=\{\cE,\cE'\}$ for $\openone/2$.

Along similar lines, a \emph{measurement assemblage}
$\cM\cA=\{M_{a|x}\}_{a,x}$
is a collection of positive operators $M_{a|x}\geq 0$ satisfying 
$\sum_{a} M_{a|x} = \openone$ for each $x$.
Such a collection represents one \emph{positive-operator-valued measure} (or
POVM), describing a general quantum measurement, for each $x$.
For a fixed bipartite state $\hat{\rho}_{AB}$, every measurement assemblage on
Alice gives rise to an assemblage on Bob, via
\beq
\label{eq:measurement}
\rho^B_{a|x}=\Tr_A\bigl(M_{a|x}^A\hat{\rho}_{AB}\bigr).
\eeq
On the other hand, every assemblage on Bob $\{\sax\}_{a,x}$ has a quantum
realization \eqref{eq:measurement} for some $\hat{\rho}_{AB}$ satisfying
$\hat{\rho}_{B}=\Tr_A(\hat{\rho}_{AB})=\sum_x \sigma_{a|x}=:\hat{\sigma}_B$ and
for some measurement assemblage $\{M_{a|x}\}_{a,x}$~\cite{Hughston1993}.

An assemblage $\cA=\{\rax\}_{a,x}$ is \emph{unsteerable} if
\begin{equation}
  \label{eq:unsteerable}
  \rax^\us=\sum_\lambda p(\lambda)p(a|x,\lambda)\hat{\sigma}(\lambda)
  =\sum_\lambda p(a|x,\lambda)\sigma(\lambda),
\end{equation}
for all $a, x$, for some probability distribution $p(\lambda)$, conditional
probability distributions $p(a|x,\lambda)$, and states
$\hat{\sigma}(\lambda)$. 
Here $\lambda$ indicates a (hidden) classical random variable, and we
introduced also subnormalized states
$\sigma(\lambda)=p(\lambda)\hat{\sigma}(\lambda)$. 
We observe that every conditional probability distribution $p(a|x,\lambda)$ 
can be written as a convex combination of deterministic conditional probability
distributions: 
$p(a|x,\lambda)=\sum_\nu p(\nu|\lambda)D(a|x,\nu)$, where 
$D(a|x,\nu)=\delta_{a,f_\nu(x)}$ is a deterministic response function labeled 
by $\nu$.
This means that, by a suitable relabeling,
\beq
\label{eq:unsteerabledeterministic}
\rax^\us=\sum_\lambda D(a|x,\lambda)\sigma(\lambda)\quad\forall a, x,
\eeq
where the summation is over labels of deterministic response functions. 
We say that an assemblage $\{\rax\}_{a,x}$ is \emph{steerable} if it is not 
unsteerable.

A \emph{separable} (or \emph{unentangled}) state is one that admits a
decomposition~\cite{werner1989}
\beq
\label{eq:separable}
\hat{\sigma}_{AB}^{\textrm{sep}}
=\sum_\lambda p(\lambda)\hat{\sigma}_A(\lambda)\otimes\hat{\sigma}_B(\lambda),
\eeq
for $\hat{\sigma}_A(\lambda)$, $\hat{\sigma}_B(\lambda)$ local states, 
$\lambda$ a classical label, and $p(\lambda)$ a probability distribution.
A state is \emph{entangled} if it is not separable.
An unsteerable assemblage can always be obtained via \eqref{eq:measurement}
from the separable state $\rho_{AB}=\sum_\lambda
p(\lambda)\proj{\lambda}_A\otimes \hat{\sigma}(\lambda)_B$,
$M_{a|x}=\sum_{\mu}p(a|x,\mu)\proj{\mu}$, and
$\braket{\mu}{\lambda}=\delta_{\mu\lambda}$. 
Most importantly, any separable state can only give raise to unsteerable
assemblages. 
Indeed, for a separable state of the form \eqref{eq:separable}, one has
\[
\sax^\us=\Tr_A(M_{a|x}\sigma_{AB}^{\textrm{sep}})
=\sum_\lambda p(\lambda)p(a|x,\lambda)\sigma_B(\lambda),
\]
with $p(a|x,\lambda)=\Tr_A(M_{a|x}\sigma_A(\lambda))$.
It follows that entanglement is a necessary condition for steerability, and, in
turn, a steerable assemblage is a clear signature of entanglement.
Interestingly, not all entangled states lead to steerable assemblages by the
action of appropriate local measurement
assemblages~\cite{wisemanPRL2007,jonesPRA2007}; we call \emph{steerable states}
those that do, and \emph{unsteerable states} those that do not.
There exist entangled states that are steerable in one direction, e.g., with
Alice measuring, but not in the other, when it is instead Bob trying to steer
(see, e.g.,~\cite{bowles2014one}). 
In the following we will always think of Alice as the steering party, and, for
the sake of brevity we will call (un)steerable the states that can (not) be
steered by Alice.

\emph{Channel and subchannel identification.---} 
A \emph{subchannel} $\Lambda$ is a linear completely positive map that is trace
non-increasing: $\Tr(\Lambda[\rho])\leq\Tr(\rho)$, for all 
states $\rho$.
If  a subchannel $\Lambda$ is trace-preserving, $\Tr(\Lambda[\rho])=\Tr(\rho)$,
for all $\rho$, we use the $\,\hat{}$ notation and say that $\hat{\Lambda}$ is
a \emph{channel}.
An \emph{instrument} $\cI=\{\Lambda_a\}_a$ for a channel $\hat{\Lambda}$ is
a collection of subchannels $\Lambda_a$ such that
$\hat{\Lambda}=\sum_a\Lambda_a$ (see Figure~\ref{fig:subchannel}). 
Any instrument has a physical realization, where the (classical) index
$a$ can be considered available to some
party~\cite{davies1970operational,nielsen2010quantum,michalhorodeckientmeasures}.

\begin{figure}
  \begin{tikzpicture}[>=latex]

    \node[minimum width=6mm, minimum height=6mm] (Rho) at (-2,-0.75) {$\rho$};
    \node (Lambda1) at (2,1) {$\Lambda_1[\rho]$};
    \node (Lambda2) at (2,0) {$\Lambda_2[\rho]$};
    \node (Lambdaa) at (2,-1.75) {$\Lambda_a[\rho]$};

    \draw[->] (Rho.30) -- node[above=1mm, pos=0.65] {$\Lambda_1$} 
    (Lambda1.west);
    \draw[->] (Rho.5) -- node[above=0.5mm, pos=0.65] {$\Lambda_2$} 
    (Lambda2.west);
    \draw[->] (Rho.340) -- node[above=0.5mm, pos=0.65] {$\Lambda_a$} 
    (Lambdaa.west);

    \node at (0.4,-0.5) {$\vdots$};
    \node at (0.4,-1.75) {$\vdots$};
    \node at (2,-0.75) {$\vdots$};
    \node at (2,-2.25) {$\vdots$};

    \node at (1.9,-0.65) [densely dotted, draw=black!90, minimum height=41mm, 
      minimum width=13mm] {};

    \node at (0.4,-0.6) [densely dotted, draw=black!90, minimum height=35mm, 
      minimum width=11mm] {};

    \node at (3.1,-2.4) {$\hat{\Lambda}[\rho]$};
    \node at (-0.5,-2) {$\hat{\Lambda}$};
  \end{tikzpicture}
  \caption{A decomposition of a channel into subchannels can be seen as a
    decomposition of a quantum evolution into \emph{branches} of the
    evolution. If $\{\Lambda_a\}_a$ is an instrument for $\hat{\Lambda}$,
    then we can imagine that the evolution $\rho\mapsto\hat{\Lambda}[\rho]$ has
    branches $\rho\mapsto\Lambda_a[\rho]$, where each branch takes place with
    probability $\Tr(\Lambda_a[\rho])$. The transformation described by the total
    channel $\hat{\Lambda}$ can be seen as the situation where the ``which-branch''
    information is lost. An example of subchannel discrimination problem is that
    of distinguishing between the two quantum evolutions $\Lambda_i[\rho]=K_i
    \rho K_i^\dagger$, $i=0,1$, with $K_0=\proj{0}+\sqrt{1-\gamma}\proj{1}$ and
    $K_1=\sqrt{\gamma}\ket{0}\bra{1}$, corresponding to the so-called amplitude
    damping channel $\hat{\Lambda}=\Lambda_0+\Lambda_1$~\cite{nielsen2010quantum}.}
  \label{fig:subchannel}
\end{figure}
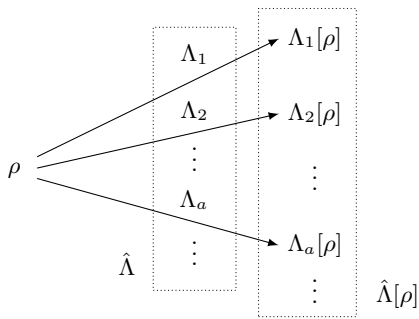

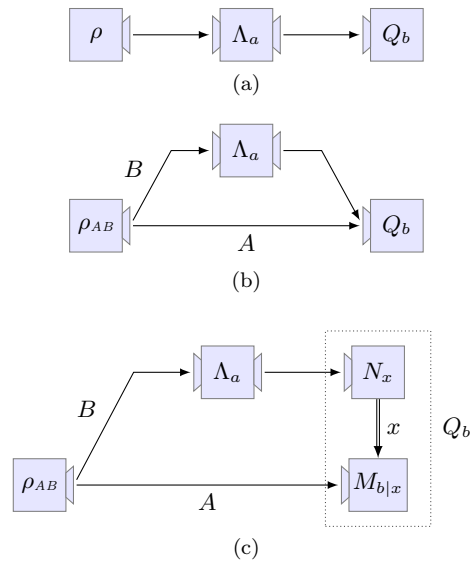
\begin{figure}[!t]
\subfigure[]{
    \begin{tikzpicture}[
        bluebox/.style={draw=black!50, minimum width=7mm, minimum height=7mm,
          fill=blue!10},
        rightnozzle/.style={shape=trapezium, shape border rotate=90, 
          trapezium angle=75, minimum width=5mm, minimum height=1mm, 
          inner sep=0.5mm, trapezium stretches, draw=black!50, fill=blue!10},
        leftnozzle/.style={shape=trapezium, shape border rotate=270, 
          trapezium angle=75, minimum width=5mm, minimum height=1mm, 
          inner sep=0.5mm, trapezium stretches, draw=black!50, fill=blue!10},
        >=latex]
      
      \node (Lambda) at (0,0) [bluebox] {$\Lambda_a$};
      \node (LambdaOut) at ([xshift=-0.4pt]Lambda.east) 
            [rightnozzle, anchor=west] {};

      \node (LambdaIn) at ([xshift=0.4pt]Lambda.west)[leftnozzle,
        anchor=east] {};
      
      \node (Rho) at (-2,0) [bluebox] {$\rho$};
      \node (RhoOut) at ([xshift=-0.4pt]Rho.east) [rightnozzle, anchor=west] {};
      
      \node (Q) at (2,0) [bluebox] {$Q_b$};
      \node (QIn) at ([xshift=0.4pt]Q.west) [leftnozzle, anchor=east] {};
      \draw[->] ([xshift=1pt]RhoOut.east) -- ([xshift=-1pt]LambdaIn.west);
      \draw[->] ([xshift=1pt]LambdaOut.east) -- ([xshift=-1pt]QIn.west);
    \end{tikzpicture}
  \label{fig:NE}
}
\vspace{3mm}
\subfigure[]{
\begin{tikzpicture}[
       bluebox/.style={draw=black!50, minimum width=7mm, minimum height=7mm,
        fill=blue!10},
      rightnozzle/.style={shape=trapezium, shape border rotate=90, 
        trapezium angle=75, minimum width=5mm, minimum height=1mm, 
        inner sep=0.5mm, trapezium stretches, draw=black!50, fill=blue!10},
      leftnozzle/.style={shape=trapezium, shape border rotate=270, 
        trapezium angle=75, minimum width=5mm, minimum height=1mm, 
        inner sep=0.5mm, trapezium stretches, draw=black!50, fill=blue!10},
      >=latex]

    \node (Lambda) at (0,1) [bluebox] {$\Lambda_a$};
    \node (LambdaOut) at ([xshift=-0.4pt]Lambda.east) 
          [rightnozzle, anchor=west] {};
    \node (LambdaIn) at ([xshift=0.4pt]Lambda.west) 
          [leftnozzle, anchor=east] {};

    \node (Rho) at (-2,0) [bluebox] {$\rho_{\textit{\tiny AB}}$};
    \node (RhoOut) at ([xshift=-0.4pt]Rho.east) 
          [rightnozzle, anchor=west] {};

    \node (Q) at (2,0) [bluebox] {$Q_b$};
    \node (QIn) at ([xshift=0.4pt]Q.west) 
          [leftnozzle, anchor=east] {};

    \draw[->] ([xshift=1pt,yshift=2pt]RhoOut.east) -- node[above left] {$B$}
    (-1,1) -- ([xshift=-1pt]LambdaIn.west);
    \draw[->] ([xshift=1pt]LambdaOut.east) -- (1,1) -- 
    ([xshift=-1pt,yshift=2pt]QIn.west);

    \draw[->] ([xshift=1pt]RhoOut.east) -- node[below] {$A$}
    ([xshift=-1pt]QIn.west);

  \end{tikzpicture}
\label{fig:diamond}
}
\subfigure[]{
  \begin{tikzpicture}[
       bluebox/.style={draw=black!50, minimum width=7mm, minimum height=7mm,
        fill=blue!10},
      rightnozzle/.style={shape=trapezium, shape border rotate=90, 
        trapezium angle=75, minimum width=5mm, minimum height=1mm, 
        inner sep=0.5mm, trapezium stretches, draw=black!50, fill=blue!10},
      leftnozzle/.style={shape=trapezium, shape border rotate=270, 
        trapezium angle=75, minimum width=5mm, minimum height=1mm, 
        inner sep=0.5mm, trapezium stretches, draw=black!50, fill=blue!10},
      >=latex]

    \node (Lambda) at (0.5,1.5) [bluebox] {$\Lambda_a$};
    \node (LambdaOut) at ([xshift=-0.4pt]Lambda.east) 
          [rightnozzle, anchor=west] {};
    \node (LambdaIn) at ([xshift=0.4pt]Lambda.west) 
          [leftnozzle, anchor=east] {};

    \node (Rho) at (-2,0) [bluebox] {$\rho_{\textit{\tiny AB}}$};
    \node (RhoOut) at ([xshift=-0.4pt]Rho.east) 
          [rightnozzle, anchor=west] {};

    \node (N) at (2.5,1.5) [bluebox] {$N_x$};
    \node (NIn) at ([xshift=0.4pt]N.west) 
          [leftnozzle, anchor=east] {};

    \node (M) at (2.5,0) [bluebox, inner sep=0.5mm] {$M_{b|x}$};
    \node (MIn) at ([xshift=0.4pt]M.west) 
          [leftnozzle, anchor=east] {};

    \draw[->] ([xshift=1pt,yshift=2pt]RhoOut.east) -- node[above left] {$B$}
    (-0.75,1.5) -- ([xshift=-1pt]LambdaIn.west);
    \draw[->] ([xshift=1pt]LambdaOut.east) -- ([xshift=-1pt]NIn.west);

    \draw[->] ([xshift=1pt]RhoOut.east) -- node[below] {$A$}
    ([xshift=-1pt]MIn.west);

    \draw[->,double] (N.south) -- node[right] {$x$}
    (M.north);

    \node at (2.5,0.75) [densely dotted, draw=black!70, minimum height=26mm, 
      minimum width=14mm] {};

    \node at (3.55,0.75) {$Q_b$};

  \end{tikzpicture}
  \label{fig:oneway}
}
\caption{Different strategies for subchannel discrimination. (a) No
  entanglement is used: a probe, initially in the state $\rho$, undergoes the
  quantum evolution $\hat{\Lambda}$, with branches $\Lambda_a$, and is later
  measured, with an outcome $b$ for the measurement described by the POVM
  $\{Q_b\}_b$, which is the guess for which branch of the evolution actually
  took place. (b) The probe $B$ is potentially entangled with an ancilla $A$;
  the output probe and the ancilla are jointly measured. (c) The probe is still
  potentially entangled with an ancilla, but the final measurement $\{Q_b\}_b$
  is restricted to local measurements on the output probe and the ancilla,
  coordinated  by one-way classical communication
 (single lines represent quantum systems, double lines classical information):
  the outcome $x$ of the measurement $\{N_x\}_x$ performed on the output probe
  is used to decide which measurement $\{M_{b|x}\}_b$ to perform on the
  ancilla.}
\label{fig:fig}
\end{figure}


 Fix an instrument $\{\Lambda_a\}_a$ for a channel $\hat{\Lambda}$, and
 consider a measurement $\{Q_b\}_b$ on the output space of $\hat{\Lambda}$. 
 The joint probability of $\Lambda_a$ and $Q_b$ for input $\rho$ is 
 $p(a,b):=\Tr(Q_b \Lambda_a[\rho])=p(b|a)p(a)$, where 
 $p(a)=\Tr(\Lambda_a[\rho])$ is the probability of the subchannel $\Lambda_a$
 for the given input $\rho$ and $p(b|a)=p(a,b)/p(a)$ is the conditional
 probability of the outcome $b$ given that the subchannel $\Lambda_a$ took 
 place (see Figure~\ref{fig:NE}). 
 The probability of correctly identifying which subchannel was realized is
 \begin{multline}
   p_\textrm{corr}(\{\Lambda_a\}_a,\{Q_b\}_b,\rho)\\
   :=\sum_{a,b}  p(a,b)\delta_{ab}
 =\sum_a \Tr(Q_a \Lambda_a[\rho]).
 \end{multline}

The archetypal case of subchannel discrimination is that of \emph{channel}
discrimination, where $\Lambda_a = p_a \hat{\Lambda}_a$, with channels
$\hat{\Lambda}_a$ and probabilities $p_a$. 
The problem often considered is 
that of telling apart just two
channels $\hat{\Lambda}_0$ and $\hat{\Lambda}_1$, each given with probability
$p_0=p_1=1/2$.
In this case the total (average) channel is simply 
$\hat{\Lambda}=\frac{1}{2}\hat{\Lambda}_0 + \frac{1}{2} \hat{\Lambda}_1$.
The best success probability in identifying subchannels $\{\Lambda_a\}_a$ with
an input $\rho$ is defined as
$p_\textrm{corr}(\{\Lambda_a\}_a,\rho):=\max_{\{Q_b\}_b}p_\textrm{corr}(\{\Lambda_a\}_a,\{Q_b\}_b,\rho)$.
Optimizing also over the input state, one arrives at
$p^{\textup{NE}}_\textrm{corr}(\{\Lambda_a\}_a)
:=\max_{\rho}p_\textrm{corr}(\{\Lambda_a\}_a,\rho)$,
where the superscript $\textup{NE}$ stands for ``no entanglement'' (see
Fig.~\ref{fig:NE}).

Indeed, one may try to improve the success probability by using an entangled
input state $\rho_{AB}$ of an input probe $B$ and an ancilla $A$. 
The guess about which subchannel took place is based on a joint measurement 
of the output probe and the ancilla (see Fig.~\ref{fig:diamond}), with success
probability $p_\textrm{corr}(\{\Lambda^B_a\}_a,\{Q^{AB}_b\}_b,\rho_{AB})$. 
In the latter expression we have explicitly indicated that the subchannels act
non-trivially only on $B$, while input state and measurement pertain to
$AB$.
One can define the optimal probability of success for a scheme that uses
input entanglement and global measurements:
$p^{\textup{E}}_\textrm{corr}(\{\Lambda_a\}_a):=\max_{\rho_{AB}}\max_{\{Q^{AB}_b\}_b}p_\textrm{corr}(\{\Lambda^B_a\}_a,\{Q^{AB}_b\}_b,\rho_{AB})$.
We say that entanglement is useful in discriminating subchannels
$\{\Lambda_a\}_a$ if
$p^{\textup{E}}_\textrm{corr}(\{\Lambda_a\}_a)
>p^{\textup{NE}}_\textrm{corr}(\{\Lambda_a\}_a)$.
It is known that there are instances of subchannel discrimination, already in
the simple setting
$\{\Lambda_a\}_a=\{\frac{1}{2}\hat{\Lambda}_0,\frac{1}{2}\hat{\Lambda}_0\}$,
where $p^{\textup{E}}_\textrm{corr}\approx 1\gg
p^{\textup{NE}}_\textrm{corr}\approx 0$~(see \cite{MPW10} and references
therein).

In~\cite{PW09} it was proven that, for any entangled state $\rho_{AB}$, there
exists a choice $\{\frac{1}{2}\hat{\Lambda}_0,\frac{1}{2}\hat{\Lambda}_1\}$
such that
\[
p_\textrm{corr}\Bigl(\Bigl\{
\frac{1}{2}\hat{\Lambda}_0,\frac{1}{2}\hat{\Lambda}_1\Bigr\},
\rho_{AB}\Bigr)
> p^{\textup{NE}}_\textrm{corr}\Bigl(\Bigl\{\frac{1}{2}\hat{\Lambda}_0,
\frac{1}{2}\hat{\Lambda}_1\Bigr\}\Bigr),
\]
i.e., that every entangled state is useful for the task of (sub)channel
discrimination.
In this sense, every entangled state, independently of how weakly entangled it
is, is a resource. 
Nonetheless, exploiting such a resource may  require arbitrary joint
measurements on the output probe and ancilla~\cite{MPW10}.
From a conceptual perspective, one may want to limit measurements to those that
can be performed by local operations and classical communication
(LOCC), as this makes the input entangled state the only non-local resource. This limitation can be justified also from a practical perspective: LOCC measurements are arguably easier to implement, and might be the only feasible kind of measurements, especially in a scenario where only weakly entangled states can be produced.
We do not know whether every entangled state  stays useful for subchannel
discrimination when measurements are restricted to be LOCC. In the following,
though, we prove that, if the measurements are limited to local operations and
forward communication (one-way LOCC), then only steerable states remain
useful.

\emph{Steerability and subchannel identification by means of restricted 
  measurements.---}  
A Bob-to-Alice one-way LOCC measurement of the form 
$\cM^{B\rightarrow A}=\{Q_a^{B\rightarrow A}\}_a$ has the structure
$Q_a^{B\rightarrow A}=\sum_x M^A_{a|x}\otimes N^B_x$, 
where $\{N^B_x\}_x$ is a measurement on $B$ and $\{M^A_{a|x}\}_{a,x}$ is a
measurement assemblage on $A$. 
We define 
$p^{B\rightarrow A}_\textrm{corr}(\cI,\rho_{AB})
:=\max_{\cM^{B\rightarrow A}}p_\textrm{corr}(\cI^B,\cM^{B\rightarrow A},
\rho_{AB})$ 
as the optimal probability of success in the discrimination of the instrument
$\cI^B=\{\Lambda^B_a\}_a$ by means of the input state $\rho_{AB}$ and one-way
LOCC measurements from $B$ to $A$ (see Fig.~\ref{fig:oneway}). We say that
$\rho_{AB}$ is useful in this restricted-measurement scenario if
$p^{B\rightarrow
  A}_\textrm{corr}(\cI,\rho_{AB})>p^{\textup{NE}}_\textrm{corr}(\cI)$ for some
instrument $\cI$
\footnote{Notice that no bipartite state $\rho_{AB}$ is useful in one-way 
  subchannel identification when the communication goes from the ancilla to the
  output probe. 
  This is because the initial measurement of the ancilla simply creates an
  ensemble of input substates for the channel, and we might as well choose the
  best input to begin with. 
  So, the only one-way communication that may have a non-trivial effect is that
  from the output probe to the ancilla.}. 
We find in general
\begin{multline}
\label{eq:ingeneral}
p_\textrm{corr}(\cI^B,\cM^{B\rightarrow A},\rho_{AB})
=\sum_a\Tr(Q_a^{B\rightarrow A} \Lambda^B_a[\rho_{AB}])\\
=\sum_{a,x} \Tr_B(\Lambda^{\dagger B}_a[N^B_x]  \rax),
\end{multline}
where we used \eqref{eq:measurement}, and $\Lambda^{\dagger}_a$ is the dual of
$\Lambda_a$, defined via $\Tr(X\Lambda_a[Y])=\Tr(\Lambda^\dagger_a[X]Y)$, 
$\forall X, Y$ (assuming $\Lambda_a$ is completely positive). 
If the assemblage $\cA=\{\rax\}_{a,x}$ appearing on the last line of
\eqref{eq:ingeneral} is unsteerable, then we can achieve an equal or better
performance with an uncorrelated probe in the best input state
$\hat{\sigma}(\lambda)$ among the ones appearing in
Eq.~\eqref{eq:unsteerable}.
Thus, if $\rho_{AB}$ is unsteerable, then it is useless for subchannel
discrimination with one-way measurements. 
This applies also to entangled states that are unsteerable, which are
nonetheless useful in channel discrimination with arbitrary
measurements~\cite{PW09}.

We will now prove that every steerable state \emph{is} useful in subchannel
discrimination with one-way-LOCC measurements. 
To state our result in full detail we need to introduce the \emph{steering
  robustness of $\rho_{AB}$},
\beq
\label{eq:steeringrobustnessrho}
R^{A\rightarrow B}_{\textrm{steer}}(\rho_{AB}) := \sup_{\cM\cA} R(\cA),
\eeq
where the supremum is over all measurement assemblages 
$\cM\cA=\{M_{a|x}\}_{a,x}$ on $A$, $R(\cA)$ is the \emph{steering robustness of
  the assemblage $\cA$},
\begin{multline}
\label{eq:steeringrobustnessassemblage}
R(\cA):=\min\biggl\{t\geq 0
\,\bigg|\,\Big\{\frac{\rho_{a|x}+t\,\tau_{a|x}}{1+t}\Big\}_{a,x}
\textrm{ unsteerable},\\
\{\tau_{a|x}\}\textrm{ an assemblage}\biggr\},
\end{multline}
and $\cA$ is obtained from $\rho_{AB}$ with the measurement assemblage
$\cM\cA$ on $A$ (see Eq.~\eqref{eq:measurement}). 
The steering robustness of $\cA$ is a measure of the minimal ``noise'' needed
to destroy the steerability of the assemblage $\cA$, where such noise is in
terms of the mixing with an arbitrary assemblage $\{\tau_{a|x}\}_{a,x}$.
With the notation set, we have the following theorem.
\begin{theorem}
\label{thm:main}
Every steerable state is useful in one-way subchannel discrimination. 
More precisely, it holds
\beq
\label{eq:everysteer}
\sup_{\cI}\frac{p^{B\rightarrow A}_\textrm{corr}(\cI,\rho_{AB})}{p^{\textup{NE}}_\textrm{corr}(\cI)}=R^{A\rightarrow B}_{\textrm{steer}}(\rho_{AB})+1,
\eeq
where the supremum is over all instruments $\cI$.
\end{theorem}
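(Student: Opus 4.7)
The plan is to establish the equality in \eqref{eq:everysteer} by proving both inequalities.

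\textbf{Upper bound $(\le)$.} For any instrument $\cI$ and one-way LOCC measurement $\cM^{B\to A}$, the success probability \eqref{eq:ingeneral} depends only on $\{\Lambda^\dagger_a[N_x]\}$ and Bob's induced assemblage $\cA=\{\rho^B_{a|x}\}$. The steering-robustness decomposition writes $\rho^B_{a|x}=(1+R(\cA))\rho^\us_{a|x}-R(\cA)\tau_{a|x}$ with $\{\rho^\us_{a|x}\}$ unsteerable and $\{\tau_{a|x}\}$ an assemblage. Positivity of $\Lambda^\dagger_a[N_x]$ and $\tau_{a|x}$ lets me drop the negative term, and then the deterministic expansion \eqref{eq:unsteerabledeterministic} turns the unsteerable part into a sum over $\lambda$ in which each summand is precisely the success probability of an NE scheme with input $\sigma(\lambda)/\Tr(\sigma(\lambda))$, output measurement $\{N_x\}$, and deterministic guess $a=f_\lambda(x)$. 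Each such NE success is at most $p^{\textup{NE}}_{\textrm{corr}}(\cI)\,\Tr(\sigma(\lambda))$, and summing over $\lambda$ yields $p^{\textup{NE}}_{\textrm{corr}}(\cI)$. Finally $R(\cA)\le R^{A\to B}_{\textrm{steer}}(\rho_{AB})$ by \eqref{eq:steeringrobustnessrho}.

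\textbf{Lower bound $(\ge)$.} SDP duality applied to \eqref{eq:steeringrobustnessassemblage} gives
\begin{equation*}
R(\cA)+1=\max\Bigl\{\sum_{a,x}\Tr(F_{a|x}\rho_{a|x}):F_{a|x}\ge 0,\ \sum_x F_{f(x)|x}\le\mathbb{1}\ \forall f:X\to A\Bigr\}.
\end{equation*}
For any $\epsilon>0$, pick an Alice measurement assemblage whose induced $\cA^*$ satisfies $R(\cA^*)\ge R^{A\to B}_{\textrm{steer}}(\rho_{AB})-\epsilon$ together with a near-optimal dual witness $\{F_{a|x}\}$. Build the measure-and-prepare instrument $\Lambda_a[\rho]=\sum_x\Tr(\bar F_{a|x}\rho)|x\rangle\langle x|_E$ on an ancillary classical register $E$, where $\bar F$ is a suitable rescaling of $F$ and an extra ``$\perp$'' branch restores trace preservation. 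Using $N_x=|x\rangle\langle x|_E$ on Bob and the near-optimal Alice assemblage (with its $\perp$ component set to zero) yields a LOCC success close to $(R(\cA^*)+1)/c$ where $c$ is the rescaling constant. Because all outputs are diagonal in $E$, the optimal NE output measurement can be taken diagonal as well, reducing NE to a classical guessing rule $g:X\to A\cup\{\perp\}$; the dual constraint then bounds the NE success by $\max_g\|\sum_x\bar F_{g(x)|x}\|_\infty\le 1/c$. The ratio tends to $R^{A\to B}_{\textrm{steer}}(\rho_{AB})+1$ as $\epsilon\to 0$.

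\textbf{Main obstacle.} The construction's delicate point is the treatment of the ``$\perp$'' slack. For a steerable $\rho_{AB}$, the dual optimizer $F^*$ necessarily satisfies $\|\sum_{a,x}F^*_{a|x}\|_\infty>1$, so a direct POVM interpretation is impossible and a slack outcome is required. The slack must be arranged so that (i)~it does not contribute to the LOCC success---ensured by forcing Alice's $\perp$ measurement component to zero---and (ii)~it does not inflate the NE success. Placing the $\perp$ branch's output on a symbol orthogonal to $\{|x\rangle\}_x$ realizes (i); a combined NE analysis that pits the deterministic guessing on the ``$x$'' outputs (bounded by the dual constraint) against the natural ``$\perp$''-guess on the orthogonal symbol is needed to pin down (ii) at the level required for the ratio to approach $R+1$.
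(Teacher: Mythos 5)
Your upper bound is exactly the paper's argument: drop the $-R(\cA)\,\tau_{a|x}$ term by positivity, expand the unsteerable part over deterministic response functions, and recognize each $\lambda$-term as an unentangled strategy with input $\hat{\sigma}(\lambda)$ and POVM $Q_a=\sum_x D(a|x,\lambda)N_x$, bounded by $p^{\textup{NE}}_\textrm{corr}(\cI)\Tr(\sigma(\lambda))$. That half is sound, as is your use of the SDP dual for the lower bound.

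The gap is in the lower bound, precisely at the point you flag as the ``main obstacle'' and then leave unresolved --- and it cannot be resolved with a single $\perp$ branch. With one slack subchannel $\Lambda_\perp[\rho]=\Tr\bigl((\openone-\alpha\sum_{a,x}F_{a|x})\rho\bigr)\proj{\perp}$, where $\alpha=\|\sum_{a,x}F_{a|x}\|_\infty^{-1}$, the unentangled player can feed in the eigenvector of $\sum_{a,x}F_{a|x}$ with minimal eigenvalue $\lambda_{\min}$ and always guess $\perp$, succeeding with probability $1-\lambda_{\min}/\lambda_{\max}$. Nothing in the dual constraints controls $\lambda_{\min}/\lambda_{\max}$, so $p^{\textup{NE}}_\textrm{corr}(\cI)$ can be near $1$ rather than near $\alpha$, and the ratio collapses; no ``combined NE analysis'' can rescue this instrument, because it genuinely fails to witness the robustness. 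The paper's fix is to dilute the slack: it splits it into $N$ subchannels $\Lambda_a[\rho]=\frac{1}{N}\Tr\bigl((\openone-\sum_{a'\le|A|}\Lambda^\dagger_{a'}[\openone])\rho\bigr)\hat{\sigma}_a$, $a=|A|+1,\dots,|A|+N$, whose outputs $\hat{\sigma}_a$ all live in a fixed two-dimensional subspace orthogonal to $\spann\{\ket{x}\}$. Any measurement's total success over these branches is then at most $\frac{1}{N}\Tr(\Pi^\perp)=\frac{2}{N}$, which pins $p^{B\rightarrow A}_\textrm{corr}$ and $p^{\textup{NE}}_\textrm{corr}$ to within $2/N$ of $\alpha(1+R(\cA))$ and $\alpha$ respectively, so the ratio tends to $1+R(\cA)$ as $N\to\infty$. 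You need this dilution (or an equivalent device making the slack branch unguessable) to complete the proof; everything else in your outline matches the paper's route.
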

\begin{proof}
Using the definitions \eqref{eq:steeringrobustnessrho} and
\eqref{eq:steeringrobustnessassemblage} it is immediate to verify (see
Appendix)
\[
p_\textrm{corr}(\cI^B,\cM^{B\rightarrow A},\rho_{AB})\leq 
(1+R^{A\rightarrow B}_{\textrm{steer}}(\rho_{AB}))
p^{\textup{NE}}_\textrm{corr}(\cI),
\]
for any $\cM^{B\rightarrow A}$ and any $\cI$.
We will prove next that the bound can be approximated arbitrarily well. 
We will do so by constructing appropriate instances of the subchannel 
discrimination problem. 
To do this, we will need that the steering robustness $R(\cA)$ of any 
assemblage $\cA=\{\rax\}_{a,x}$ can be calculated via semidefinite programming
(SDP)~\cite{boyd2004convex}. 
In particular, in the Appendix we prove that $R(\cA) + 1$ is equal to the
optimal value of the SDP optimization problem
\begin{subequations}
  \label{eq:SDProbustness}
  \begin{align}
    {\text{maximize}}\quad
    & \sum_{a,x}\Tr(F_{a|x}\rho_{a|x}) \\
    \text{subject to}\quad
    \label{eq:SDPcondition1} &  
    \sum_{a,x} D(a|x,\lambda)F_{a|x}\leq\openone\quad\forall \lambda\\
    & F_{a|x}\geq0 \quad\forall a,x,\label{eq:Fpositive}
  \end{align}
\end{subequations}
where the $\lambda$'s are labels for the deterministic response functions.

Now, let $\cM\cA=\{M_{a|x}\}_{a,x}$ be a measurement assemblage on $A$, and
$\cA$ 
the resulting assemblage on $B$.
 Let $F_{a|x}$ be optimal,
 i.e., such that $ \sum_{a,x}\Tr(F_{a|x}\rho_{a|x})=1+R(\cA)$. Define linear maps $\Lambda_a$ via their duals, as
\begin{align}
\label{eq:quantumtoclassical}\Lambda_a^\dagger &= \Lambda_a^\dagger \circ \Pi_{X} & \forall& a,\\
\label{eq:define}\Lambda_a^\dagger\left[\proj{x}\right]&=\alpha F_{a|x} & \forall& a,x.
\end{align}
Here $\circ$ is composition, and $\Pi_X$ indicates the projector onto an orthonormal basis $\{\ket{x}\}$, $x=1,\dots,|X|$, where $|X|$ is the number of settings in the measurement assemblage $\cM\cA$. The constant $\alpha>0$ will be chosen soon. Because of the conditions \eqref{eq:Fpositive}, \eqref{eq:quantumtoclassical}, and \eqref{eq:define}, the $\Lambda_a^\dagger$'s are completely positive linear maps, hence the $\Lambda_a$'s are too; they 
act according to $\Lambda_a[\rho]=\alpha\sum_x \Tr( F_{a|x}\rho)\proj{x}$, and can be seen as subchannels as long as 
$\sum_a \Lambda^\dagger_a[\openone] =\sum_{a,x} \Lambda^\dagger_a[\proj{x}] = \alpha \sum_{a,x} F_{a|x} \leq \openone$,
a condition that can be  satisfied for $\alpha=  \|\sum_{a,x}
F_{a|x}\|_\infty^{-1}$, with $\|\cdot\|_\infty$ the operator norm. 

We can now introduce $N$ additional subchannels, defined as 
$\Lambda_a[\rho] = \frac{1}{N}\Tr((\openone-\sum_a 
\Lambda^\dagger_a[\openone])\rho)\hat{\sigma}_a$, for $a=|A|+1,\ldots,|A|+N$,
where $|A|$ indicates the original number of outcomes for POVMs in $\cM\cA$,
and $\hat{\sigma}_a$ are arbitrary states in a two-dimensional space orthogonal
to $\spann\{\ket{x}\,|\,x=1,\dots,|X|\}$.
The subchannels $\Lambda_a$, $a=1,\dots, |A|+N$ do define an instrument $\cI$ for the trace-preserving channel $\hat{\Lambda}=\sum_{a=1}^{|A|+N} \Lambda_a$, and one can readily (see Appendix)  incorporate the measurement assemblage $\cM\cA$ into 
a one-way LOCC strategy $\cM^{B\rightarrow A}$ such that
$
\alpha (1+R(\cA))
\leq p_\textrm{corr}(\cI^B,\cM^{B\rightarrow A},\rho_{AB})\leq \alpha(1+R(\cA)) +\frac{2}{N}$.
On the other hand, condition \eqref{eq:SDPcondition1} implies (see Appendix)
$\alpha \leq p^{\textup{NE}}_\textrm{corr}(\cI) \leq \alpha + \frac{2}{N}$, so $p_\textrm{corr}(\cI^B,\cM^{B\rightarrow A},\rho_{AB})/p^{\textup{NE}}_\textrm{corr}(\cI)\geq  \frac{1+R(\cA)}{1+2/(\alpha N)}$. The claim follows since 
$N$ can be chosen arbitrarily large.
\end{proof}

\emph{Conclusions.---} We have proven that the steerable states are precisely
those states that are useful for the task of subchannel discrimination with 
feed-forward local measurements. 
This provides a satisfactorily answer to a question left open by~\cite{MPW10}
about the characterization of a large class of entangled states that remain
useful for (sub)channel discrimination with local measurements. 
Most importantly, it provides a full operational characterization---and proof
of usefulness---of steering in terms of a fundamental task, subchannel
discrimination, in a setting---that of restricted measurements---very relevant
from the practical point of view. 
The construction in the proof of Theorem~\ref{thm:main} proves that, for any
measurement assemblage $\cM\cA$ on $A$ such that the corresponding $\cA$
exhibit steering with robustness $R(\cA)>0$, there exist instances of the
subchannel discrimination problem with restricted measurements where the use of
the steerable state ensures a probability of success approximately
$(1+R(\cA))$-fold higher than in the case where no entanglement is used.
Thus, the robustnesses $R(\cA)$ and 
$R^{A\rightarrow B}_{\textrm{steer}}(\rho_{AB})$ have operational meanings not
only in terms of the resilience of steerability versus noise, but in
applicative terms.
Also, they constitute semi-device-independent lower bounds,
\beq
\label{eq:robustnessbounds}
R(\cA)\leq R^{A\rightarrow B}_{\textrm{steer}}(\rho_{AB}) \leq R_g(\rho_{AB}),
\eeq
on the generalized robustness of entanglement
$R_g(\rho_{AB})$~\cite{generalizedrobustnesssteiner,
  generalizedrobustnessharrow},
defined as 
\begin{equation}
\label{eq:entanglementrobustness}
\min\Big\{t\geq 0\,\Big|\,\frac{\rho_{AB}+t\,\tau_{AB}}{1+t}
\textrm{ separable},
\tau\textrm{ a state}\Big\},
\end{equation}
which is an entanglement measure with operational interpretations
itself~\cite{brandaodatta,brandaorobustness}. That \eqref{eq:robustnessbounds}
holds is immediate, given definitions \eqref{eq:steeringrobustnessrho} and
\eqref{eq:steeringrobustnessassemblage} and the fact that a separable state
leads always to unsteerable assemblages. 
Besides these observations, we believe that the way to quantify steerability
that we have introduced is finer-grained than the approach
of~\cite{quantifyingsteering}, while preserving the computational efficiency deriving from the use of semidefinite programming. 
For example, while the so-called \emph{steering weight}
of~\cite{quantifyingsteering} is such that all pure entangled states, however
weekly entangled, are deemed maximally steerable, because of \eqref{eq:robustnessbounds} we know that weakly entangled pure states have small steering robustness~\cite{generalizedrobustnessharrow}. On the other hand, maximally entangled states
$\psi^+_d$ for large local dimension $d$ do have large steering robustness. 
Indeed, in the Appendix we prove that, if $d$ is some power of a prime number, then 
$R^{A\rightarrow B}_{\textrm{steer}}(\psi^{+}_d)\geq \sqrt{d}-2$.

The are many questions that remain open for further investigation: a closed
formula for the steerability robustness of pure (maximally entangled) states;
whether the result of Theorem~\ref{thm:main} can be strengthened to prove that
every steerable state is useful for channel---rather than general
subchannel---discrimination with restricted measurements; whether general LOCC
(rather than one-way LOCC) measurements can restore the usefulness of all
entangled states for (sub)channel discrimination.

\emph{Acknowledgments.---}We acknowledge useful discussions and correspondence
with D. Cavalcanti and M. Pusey. We also acknowledge support from NSERC and
CIFAR.


\appendix

\section{Robustness as semidefinite program}

Inspired by the work of Pusey~\cite{pusey2013} and Skrzypczyk et
al.~\cite{quantifyingsteering}, we are going to prove that calculating the
steering robustness  $R(\cA)$ of an assemblage $\cA=\{\rax\}_{a,x}$
falls under the umbrella of semidefinite
programming (SDP)~\cite{boyd2004convex}.

By definition, see Eq.~\eqref{eq:steeringrobustnessassemblage}, $R(\cA)$ is 
the minimum positive $t$ such that
\[
\rax=(1+t)\sax^\us - t\tau_{a|x},\quad \forall a,x,
\]
with $\{\sax^\us\}_{a,x}$ an unsteerable assemblage and $\{\tau_{a|x}\}_{a,x}$
an arbitrary assemblage. 
Notice that, since $\{\rax\}_{a,x}$ and $\{\sax^\us\}_{a,x}$ are assemblages,
$\tau_{a|x}=\big((1+t)\sax^\us - \rax\big)/t$ is automatically an assemblage 
as long as
\beq
\label{eq:positive}
(1+t)\sax^\us \geq \rax, ,\quad \forall a,x,
\eeq
Since $\{\sax^\us\}_{a,x}$ is unsteerable, see 
Eq.~\eqref{eq:unsteerabledeterministic}, we can rewrite Eq.~\eqref{eq:positive}
as the condition
\[
(1+t)\sum_\lambda D(a|x,\lambda)\sigma_\lambda   \geq \rax, \quad \forall a,x,
\]
where the $\sigma_\lambda$'s are subnormalized states, and the sum is over all
the deterministic strategies to output $a$ given $x$.
If we consider that the factor $(1+t)$ can be absorbed into the
$\sigma_\lambda$'s (so that they are generally unnormalized, rather
subnormalized), we realize that $R(\cA)+1$ can be characterized as the solution to
\begin{equation}
  \label{eq:SDProbustnessprimalappendix}
  \begin{aligned}
    & {\text{minimize}}
    & & \sum_{\lambda} \Tr(\sigma_\lambda) \\
    & \text{subject to}
    & &  \sum_\lambda D(a|x,\lambda)\sigma_\lambda\geq\rax\quad\forall a,x\\
    & & & \sigma_\lambda\geq0 \quad\forall \lambda
  \end{aligned}
\end{equation}
This is an example of SDP optimization 
problem~\cite{boyd2004convex}.
For our purposes, the \emph{primal problem} of an SDP is an optimization
problem cast as 
\begin{equation*}
\begin{aligned}
& {\text{minimize}}
& & \langle C,X \rangle\\
& \text{subject to}
& & \Phi[X] \geq B \\
& & & X\geq 0,
\end{aligned}
\end{equation*}
where:
\begin{itemize}
\item $\langle C,X \rangle$ is the objective function;
\item $B$ and $C$ are given Hermitian matrices;
\item $X$ is the matrix variable on which to optimize;
\item $\langle X,Y\rangle :=\Tr(X^\dagger Y)$ is the Hilbert-Schmidt inner product;
\item $\Phi$ is a given Hermiticity-preserving linear map.
\end{itemize}
The \emph{dual problem} provides a lower bound to the objective function of the
primal problem. 
The dual problem is given by 
\begin{equation*}
\begin{aligned}
& {\text{maximize}}
& & \langle B,Y \rangle\\
& \text{subject to}
& & \Phi^\dagger [Y] \leq C \\
& & & Y\geq 0, 
\end{aligned}
\end{equation*}
where $\Phi^\dagger$ is the dual of $\Phi$ with respect to the Hilbert-Schmidt
inner product, and $Y$ is another matrix variable.

One says that \emph{strong duality} holds when the optimal values of the primal
and dual problems coincide. 
Strong duality holds in many cases, and in particular under the Slater
conditions that 
(i) the primal and dual problems are both feasible, and moreover the
primal problem is \emph{strictly feasible}, meaning that there is a positive
definite $X>0$ such that $\Phi[X]>B$, or 
(ii) the primal and dual problems are both feasible, and moreover the
dual problem is strictly feasible, meaning that there is a $Y>0$ such that 
$\Phi^\dagger[Y]<C$. 
In case (i), not only do the primal and dual values coincide, but there must
exist $Y_\textrm{opt}$ that achieves the optimal value for the dual problem;
and similarly, in the case (ii), there must exist $X_\textrm{opt}$ that
achieves the optimal value in the primal problem.

In our case
\begin{gather*}
C=\openone,\quad B=\operatorname{diag}(\rax)_{a,x},\\
\Phi[X] = \operatorname{diag}\left(\sum_\lambda D(a|x,\lambda)X_\lambda\right)_{a,x}
\end{gather*}
where $\operatorname{diag}(\cdot)_{a,x}$ indicates a block-diagonal matrix
whose diagonal blocks are labeled by $a,x$, and the $X_\lambda$'s are the
diagonal blocks of $X$, labeled by $\lambda$.
Thus, we have 
$\Phi^\dagger[Y]=\operatorname{diag}
\left(\sum_{a,x}D(a|x,\lambda)Y_{a|x}\right)_\lambda$, and the dual of the
primal problem \eqref{eq:SDProbustnessprimalappendix} reads
\begin{subequations}
  \label{eq:SDProbustnessappendix}
  \begin{alignat}{3}
    & {\text{maximize}}
    & &\sum_{a,x}\Tr(F_{a|x}\rho_{a|x}) \\
    & \text{subject to }
    \label{eq:SDPcondition1appendix}& &  
    \sum_{a,x} D(a|x,\lambda)F_{a|x}\leq\openone\quad\forall \lambda\\
    & & & F_{a|x}\geq0 \quad\forall a,x,\label{eq:Fpositiveappendix}
\end{alignat}
\end{subequations}

It is easy to verify that both Slater conditions hold in our case.
For instance, one can take $\sigma_\lambda = 2\openone$ for all $\lambda$, 
and $F_{a|x}=\frac{\openone}{|X|+1}$ for all $a,x$, with 
$|X|$ being the number of possible values for $x$. 
Thus, there exist $F_{a|x}=F^\textrm{opt}_{a|x}$ satisfying the constraints 
of Eq.~\eqref{eq:SDProbustnessappendix} and such that 
$\sum_{a,x}\Tr(F_{a|x}\rho_{a|x})=1+R(\cA)$.
  
We remark that the optimal $F_{a|x}$ can always be chosen to saturate
\eqref{eq:SDPcondition1appendix}. 
That is, there is a deterministic strategy $D(a|x,\lambda)$ and a normalized
pure state $\ket{\phi}$ such that
\beq
\label{eq:saturation}
\sum_{a,x} D(a|x,\lambda)\bra{\phi} F_{a|x}\ket{\phi} 
= \bra{\phi}\openone \ket{\phi} =1
\eeq
This is because otherwise it is always possible to increase (in operator sense)
some $F_{a|x}$'s, still maintaining the optimal value for the objective 
function (which is operator monotone in the $F_{a|x}$'s).

\section{Details of the proof of Theorem~\ref{thm:main}}

The claimed upper bound,
\[
p_\textrm{corr}(\cI^B,\cM^{B\rightarrow A},\rho_{AB})\leq 
(1+R^{A\rightarrow B}_{\textrm{steer}}(\rho_{AB}))
p^{\textup{NE}}_\textrm{corr}(\cI),
\]
can be proved using \eqref{eq:ingeneral} and definitions
\eqref{eq:steeringrobustnessrho} and \eqref{eq:steeringrobustnessassemblage}:
\begin{multline*}
  p_\textrm{corr}(\cI^B,\cM^{B\rightarrow A},\rho_{AB})\\
  \begin{aligned}
    &=\sum_{a,x} \Tr_B(\Lambda^{\dagger B}_a[N^B_x]  \rax)\\
    &\leq (1+R(\cA))\sum_{a,x} \Tr_B(\Lambda^{\dagger B}_a[N^B_x]  \sax^\us)\\
    &\quad- R(\cA)\sum_{a,x} \Tr_B(\Lambda^{\dagger B}_a[N^B_x]  
    \tau_{a|x}^\us)\\
    &\leq  (1+R(\cA)) p^{\textup{NE}}_\textrm{corr}(\cI)\\
    &\leq (1+R^{A\rightarrow B}_{\textrm{steer}}(\rho_{AB}))
    p^{\textup{NE}}_\textrm{corr}(\cI).
  \end{aligned}
\end{multline*}

On the other hand, suppose that $\cM\cA=\{M_{a|x}\}_{a,x}$, where
$a=1,\ldots,|A|$ and $x=1,\ldots, |X|$, is a measurement assemblage on $A$ such
that the corresponding assemblage
$\cA=\{\rho_{a|x}=\Tr_A(M_{a|x}^A\rho_{AB})\}_{a,x}$ is steerable. 
Let $F_{a|x}\geq 0$ be the operators optimal for
\eqref{eq:SDProbustnessappendix}, such that
$\sum_{a,x}\Tr(F_{a|x}\rho_{a|x})=1+R(\cA)$. 
In the proof of Theorem~\ref{thm:main} of the main text we defined subchannels
$\Lambda_a$ that act as
\begin{multline}
\label{eq:subchannelconstruction}
\Lambda_a[\rho]=\\
\begin{cases}
\alpha \sum_{x=1}^{|X|} \Tr(\rho F_{a|x}) \proj{x} & 1\leq a \leq |A|\\
\frac{1}{N}\Tr((\openone-\sum_{a=1}^{|A|} 
\Lambda^\dagger_a[\openone])\rho)\hat{\sigma}_a & |A|+1\leq a \leq |A|+N,
\end{cases}
\end{multline}
where $\alpha= \|\sum_{a,x} F_{a|x}\|_\infty^{-1}>0$, and the 
$\hat{\sigma}_a$, $a=|A|+1,\ldots, |A|+N$, are arbitrary (normalized) states 
in a two-dimensional subspace orthogonal to
$\spann\{\ket{x}\,|\,x=1,\dots,|X|\}$. 
It is immediate to check that 
$\Tr\left(\sum_{a=1}^{|A|+N} \Lambda_a[\rho]\right) = \Tr(\rho)$ (by 
construction), so $\cI=\{\Lambda_a\}_{a=1,\ldots,|A|+N}$ is an instrument for 
the channel $\sum_{a=1}^{|A|+N} \Lambda_a$.

Let $\sigma_{AB}$ be an arbitrary bipartite state on $AB$, and let
$\cM\cA^{B\rightarrow A}=\{Q_a\}_a^{B\rightarrow A}$ be an arbitrary one-way
measurement from $B$ to $A$, i.e., $Q_a^{B\rightarrow A}=\sum_y
M'^A_{a|y}\otimes N'^B_y$, to guess which subchannel was actually realized. 
Notice that $y$ in the latter expression potentially varies in an arbitrary
range, different from the range $\{1,\ldots,|X|\}$ for the parameter $x$ of the
fixed measurement assemblage $\cM\cA$. 
Nonetheless we observe that
\[
\Lambda_a = \Pi'_X\circ\Lambda_a
\]
for $a=1,\ldots,|A|+N$, where $\circ$ is composition, and 
\[
\Pi'_X[\tau] =\sum_{x=1}^{|X|} \proj{x}\tau\proj{x}+\Pi^\perp \tau \Pi^\perp,
\]
with $\Pi^\perp$ the projector onto the two-dimensional space orthogonal to
$\spann\{\ket{x}\,|\,x=1,\dots,|X|\}$ that supports the arbitrary qubits states
$\hat{\sigma}_a$, $a=|A|+1,\ldots,|A|+N$. 
Also,
\[
\Lambda^B_a[\sigma_{AB}]= \frac{1}{N}\left(\sigma_A-\sum_{a'=1}^{|A|} 
\Tr_B( \Lambda^B_{a'}[\sigma_{AB}])\right)\otimes \hat{\sigma}^B_a,
\]
for $a=|A|+1,\ldots,|A|+N$.
This implies that, for whatever input $\sigma_{AB}$, the optimal
$Q_a^{B\rightarrow A}$ can be chosen to have the form
\begin{multline}
\label{eq:optimalQa}
Q_a^{B\rightarrow A}\\
=
\begin{cases}
\sum_{x=1}^{|X|} M'^A_{a|x}\otimes \proj{x}^B & 1\leq a \leq |A|\\
\openone^A \otimes N^B_a, & |A|+1\leq a \leq |A|+N,
\end{cases}
\end{multline}
with $ \quad\Pi^\perp N_a \Pi^\perp = N_a$, for  $|A|+1\leq a \leq |A|+N$, a
POVM on the orthogonal qubit space. 
Omitting a detailed and straightforward proof of this, we instead provide the
following intuition: For the subchannels \eqref{eq:subchannelconstruction}, the
best local measurement on the output probe is one that first of all
discriminates between the space  $\spann\{\ket{x}\,|\,x=1,\dots,|X|\}$ and the
orthogonal qubit space. 
If the probe is found in the space $\spann\{\ket{x}\,|\,x=1,\dots,|X|\}$, the
probe is then measured in the basis $\{\ket{x}\,|\,x=1,\dots,|X|\}$ and the
result if forwarded to decide which measurement to perform on the ancilla: this
is optimal because, in this subspace, the output probe is already dephased in
the basis $\{\ket{x}\,|\,x=1,\dots,|X|\}$. 
If the probe is instead found in the orthogonal qubit space, there is no
information to be gained from the ancilla, since, for the state of the probe to
have support in the orthogonal qubit space, the probe must have been discarded
and prepared in one of the random qubits states $\hat{\sigma}_a$. 
So, in this case, the ancilla is necessarily decorrelated and its state
independent of the specific $\Lambda_a$, $a=|A|+1,\ldots,|A|+N$, that has been
realized; thus the optimal guess about said $\Lambda_a$ can be made as soon as
the output probe is measured.

Then, for an optimal $\cM^{B\rightarrow A}=\{Q_a^{B\rightarrow A}\}_a$ of the
form \eqref{eq:optimalQa}, we find in general
\[
\begin{aligned}
  p_\textrm{corr}(\cI^B,\cM^{B\rightarrow A},\sigma_{AB})\hspace{-3.6cm}\\
  &=\sum_{a=1}^{|A|+N} \Tr(Q_a^{B\rightarrow A} \Lambda^B_a[\sigma_{AB}])\\
  &=\sum_{a=1}^{|A|} \Tr(Q_a^{B\rightarrow A} 
  \Lambda^B_a[\sigma_{AB}])+\sum_{a=|A|+1}^{|A|+N}  \Tr(Q_a^{B\rightarrow A} 
  \Lambda^B_a[\sigma_{AB}])\\
  &=\sum_{a=1}^{|A|} \sum_{x=1}^{|X|}\Tr(M'^{A}_{a|x} \otimes 
  \proj{x}_B\Lambda^B_a[\sigma_{AB}])\\
  &\quad +\left(1-\sum_{a=1}^{|A|} 
  \Tr( \Lambda^B_a[\sigma_{AB}])\right)\frac{1}{N}\sum_{a=|A|+1}^{|A|+N}
  \Tr(N_a \hat{\sigma}_{a})\\
  &=\sum_{a=1}^{|A|} \sum_{x=1}^{|X|}\Tr(\Lambda^\dagger_a[\proj{x}]
  \sigma_{a|x})\\
  &\quad +\left(1-\sum_{a=1}^{|A|} \Tr( \Lambda^B_a[\sigma_{AB}])\right)
  \frac{1}{N}\sum_{a=|A|+1}^{|A|+N}\Tr(N_a \hat{\sigma}_{a}),\\
\end{aligned}
\]
with $\sigma_{a|x}=\Tr_A(M'_{a|x}\sigma_{AB})$. By construction it holds
\[
\Lambda^\dagger_a[\proj{x}]=\alpha F_{a|x}
\]
for $1\leq a \leq |A|$ and $1\leq x\leq |X|$ (see Eq.~\eqref{eq:define}),
therefore
\beq
\begin{aligned}
p_\textrm{corr}(\cI^B,\cM^{B\rightarrow A},\rho_{AB})\hspace{-3.5cm}\\
&=\alpha \sum_{a=1}^{|A|} \sum_{x=1}^{|X|}\Tr(F_{a|x}\sigma_{a|x})\\
&\quad +\left(1-\sum_{a=1}^{|A|} \Tr( \Lambda^B_a[\sigma_{AB}])\right)\frac{1}{N}\sum_{a=|A|+1}^{|A|+N}\Tr(N_a \hat{\sigma}_{a})\\
\label{eq:boundadditionalsubchannels}&\leq \alpha \sum_{a=1}^{|A|} \sum_{x=1}^{|X|}\Tr(F_{a|x}\sigma_{a|x}) + \frac{2}{N}.
\end{aligned}
\eeq
In the last line we used
\[
\left(1-\sum_{a=1}^{|A|} \Tr( \Lambda^B_a[\sigma_{AB}])\right)\leq 1
\]
and
\begin{multline}
  \frac{1}{N}\sum_{a=|A|+1}^{|A|+N}\Tr(N_a \hat{\sigma}_{a})
  \leq \frac{1}{N}\Tr\left(\sum_{a=|A|+1}^{|A|+N} N_a\right)\\
  \leq \frac{1}{N} \Tr(\Pi^\perp) =\frac{2}{N}.
\end{multline}
It is clear that if $\sigma_{AB}=\rho_{AB}$ and $M'_{a|x}= M_{a|x}$ in
\eqref{eq:optimalQa}, so that $\sigma_{a|x}=\rho_{a|x}$, then we have
\[
1+R(\cA)\leq p_\textrm{corr}(\cI^B,\cM^{B\rightarrow A},\rho_{AB})\leq 
1+R(\cA) + \frac{2}{N}.
\]
It remains to prove that
\beq
\label{eq:lastclaim}
\alpha \leq p^{\textup{NE}}_\textrm{corr}(\cI) \leq \alpha + \frac{2}{N}.
\eeq
This is readily verified by considering that \eqref{eq:SDPcondition1appendix}
can be saturated, as argued at the end of the previous section
(see~\eqref{eq:saturation}), for an optimal solution of the SDP problem. So we
have that for some deterministic $D(a|x,\lambda)$ and some uncorrelated input
state $\ket{\phi}$ to the channel,
\[
\begin{split}
1&=\sum_{a=1}^{|A|}\sum_{x=1}^{|X|} D(a|x,\lambda)\bra{\phi} F_{a|x}\ket{\phi}\\
&=\frac{1}{\alpha}\sum_{a=1}^{|A|}\sum_{x=1}^{|X|} D(a|x,\lambda)\bra{\phi} 
\Lambda_a^\dagger[\proj{x}]\ket{\phi}\\
&=\frac{1}{\alpha} \sum_{a=1}^{|A|} \Tr\left(\left(\sum_{x: D(a|x,\lambda)=1}
\proj{x}\right) \Lambda_a [\proj{\phi}]\right)\\
&=\frac{1}{\alpha} \sum_{a=1}^{|A|} \Tr\left(M''_a 
\Lambda_a [\proj{\phi}]\right),\\
\end{split}
\]
having defined $M''_a:=\sum_{x: D(a|x,\lambda)=1} \proj{x}$.
Considering also the subchannels $\Lambda_a$, $a=|A|+1,\ldots,|A|+N$, and
bounding their contribution to the probability of success as in
\eqref{eq:boundadditionalsubchannels}, we arrive at \eqref{eq:lastclaim}.

\section{On the scaling of the steerability of maximally entangled states}

We have argued that
\[
R^{A\rightarrow B}_{\textrm{steer}}(\rho_{AB}) \leq R_g(\rho_{AB}).
\]
where $R_g(\rho_{AB})$ is the generalized entanglement robustness
\eqref{eq:entanglementrobustness}.
Indeed, let $\tau_{AB}$ be optimal for the generalized entanglement robustness,
i.e., suppose
\[
\sigma_{AB}=\frac{\rho_{AB}+ R_g(\rho_{AB})\tau_{AB}}{1+R_g(\rho_{AB})}
\]
is separable. 
Then $\sax = \Tr_A(M_{a|x}\sigma_{AB})$ is unsteerable for any measurement
assemblage $\{M_{a|x}\}_{a,x}$, proving that $R_g(\rho_{AB})$ is an upper bound
to $R^{A\rightarrow B}_{\textrm{steer}}(\rho_{AB})$ (see
Eq.~\eqref{eq:steeringrobustnessrho}).
This means that, if a state is weakly entangled with respect to $R_g$, it is also weakly steerable with respect to $R^{A\rightarrow B}_{\textrm{steer}}$. In~\cite{generalizedrobustnessharrow} it was proven that, for any bipartite
pure state
\[
\ket{\psi}_{AB}=\sum_i \sqrt{p_i} \ket{i}_A\ket{i}_B,
\]
here in its Schmidt decomposition, the generalized entanglement robustness is
equal to
\[
R_g(\proj{\psi}_{AB})=\left(\sum_i \sqrt{p_i}\right)^2 - 1=2\cN(\proj{\psi}_{AB}),
\]
where $\cN$ is the negativity of entanglement~\cite{negativity}. In particular, then, for a maximally entangled state in dimension $d\times d$,
$\ket{\psi^+_d}_{AB}=\frac{1}{\sqrt{d}}\sum_{i=1}^d\ket{i}_A\ket{i}_B$, one has
\[
R^{A\rightarrow B}_{\textrm{steer}}(\psi^+_{d,AB}) \leq R_g(\psi^+_{d,AB})=d-1,
\]
having used the shorthand notation $\psi^+_{d,AB}=\proj{\psi^+_d}_{AB}$.

We conclude by providing a lower bound on $R^{A\rightarrow B}_{\textrm{steer}}(\psi^+_{d,AB})$ for $d$ a power of a prime number. 
We will use techniques similar to the ones used in the examples of~\cite{jiang2013ancilla}.

Fix $d$ to be the power of a prime number. Then we know that there there are $d+1$ mutually unbiased bases, i.e., $d+1$ orthonormal sets $\{\ket{\psi_{a|x}}\}_{a=1,\dots,d}$, one for each $x=1,\ldots, d+1$, such that~\cite{bengtsson2006three}
\[
|\braket{\psi_{a|x}}{\psi_{b|y}}|
=
\begin{cases}
\delta_{a,b} & x=y \\
\frac{1}{\sqrt{d}} & x \neq y
\end{cases}
\]
We will consider a measurement assemblage $\{M_{a|x}=\proj{\psi_{a|x}}\}_{a,x}$. Suppose $\rho_{AB}=\psi^+_{d,AB}$. We have
\[
\rho^B_{a|x}=\Tr_A(M^A_{a|x} \psi^+_{d,AB})=\frac{1}{d}\proj{\psi^*_{a|x}}
\]
Here $\ket{\psi^*_{a|x}}$ indicates orthonormal vectors whose coefficients in the local basis $\{\ket{i}_B\}$ are the complex conjugate of the coefficients of $\ket{\psi_{a|x}}$ in the local basis $\{\ket{i}_A\}$. Thus, the bases $\{\ket{\psi^*_{a|x}}\}_{a=1,\dots,d}$ are still mutually unbiased.

We want to lower bound the steering robustness of $\{\rax^B\}_{a,x}$, which in turn will give us a lower bound on $R^{A\rightarrow B}_{\textrm{steer}}(\psi^+_{d,AB})$. To do this, we use a specific choice for the $F_{a|x}$'s in \eqref{eq:SDProbustnessappendix}. We choose $F_{a|x}=\beta \proj{\psi^*_{a|x}}$, where $\beta>0$ will be fixed to satisfy \eqref{eq:SDPcondition1appendix} (condition \eqref{eq:Fpositiveappendix} is  satisfied for any $\beta\geq 0$), i.e.,
\[
\left\|\sum_{a,x} D(a|x,\lambda) F_{a|x}\right\|_\infty \leq 1
\]
for all deterministic $D(a|x,\lambda)$. With our choice of $F_{a|x}$, this can be achieved by taking
\beq
\label{eq:betamin}
\begin{split}
\beta
&\leq\left(\max_\lambda\left\|\sum_x  \proj{\psi^*_{f_\lambda(x)|x}}\right\|_\infty\right)^{-1}
\end{split}
\eeq
where the maximum is over all functions $f_\lambda:\{1,\ldots,d+1\}\rightarrow \{1,\ldots,d\}$, labeled by $\lambda$. 
To estimate the right hand side of \eqref{eq:betamin}, we will use the fact~\cite{jozsaSchlienz2000} that, for
\[
\ket{\gamma}_{CD}=\sum_{x=1}^{d+1}\ket{\psi^*_{f_\lambda(x)|x}}_C\ket{x}_D,
\]
where $\{\ket{x}\}_{x=1.\ldots,d+1}$ is an orthonormal basis, the spectrum of
\[
\Tr_D(\proj{\gamma}_{CD})=\sum_x  \proj{\psi^*_{f_\lambda(x)|x}}.
\]
is the same as the spectrum of
\[
\begin{aligned}
\Tr_C(\proj{\gamma}_{CD})
&=\sum_{x,y}\braket{\psi^*_{f_\lambda(x)|x}}{\psi^*_{f_\lambda(y)|y}}
\ket{y}\bra{x}\\
&=\sum_x \ket{x}\bra{x} + \frac{1}{\sqrt{d}}\sum_{x\neq y} 
e^{i \phi_{x,y}} \ket{y}\bra{x}\\
&=\left(1-\frac{1}{\sqrt{d}}\right)\openone + 
\frac{1}{\sqrt{d}}\sum_{x, y} e^{i \phi_{x,y}} \ket{y}\bra{x}
\end{aligned}
\]
where $\phi_{x,y}$ are real numbers representing phases. 
Thus, we have
\[
\begin{split}
\left\|\sum_x  \proj{\psi^*_{f_\lambda(x)|x}}\right\|_\infty\hspace{-3cm}\\
&=\left\|\sum_{x,y}\braket{\psi^*_{f_\lambda(x)|x}}{\psi^*_{f_\lambda(y)|y}}\ket{y}\bra{x}\right\|_\infty\\
&\leq \left(1-\frac{1}{\sqrt{d}}\right) + \frac{1}{\sqrt{d}}\left\| \sum_{x, y} 
e^{i \phi_{x,y}} \ket{y}\bra{x} \right\|_\infty\\
&\leq \left(1-\frac{1}{\sqrt{d}}\right) + \frac{1}{\sqrt{d}}\left\| \sum_{x, y} 
e^{i \phi_{x,y}} \ket{y}\bra{x} \right\|_2\\
&= \left(1-\frac{1}{\sqrt{d}}\right) + \frac{1}{\sqrt{d}} (d+1)\\
 &=1+\sqrt{d}.
\end{split}
\]
Since this estimate is independent of $\lambda$, we can take $\beta= 1/(\sqrt{d}+1)$. Hence, we conclude that, for $d$ the power of a prime number,
\begin{multline}
R^{A\rightarrow B}_{\textrm{steer}}(\psi^+_{d,AB})\\
\begin{aligned}
&\geq R\left(\left\{\frac{1}{d}\proj{\psi^*_{a|x}}\right\}\right)\\
&\geq \sum_{a,x} \Tr\left(\left(\frac{1}{d}\proj{\psi^*_{a|x}}\right)
\left(\frac{1}{\sqrt{d}+1}\proj{\psi^*_{a|x}}\right)\right) -1 \\
&= \frac{1}{d(\sqrt{d}+1)}(d(d+1)) - 1\\
&= \sqrt{d}\frac{\sqrt{d}-1}{\sqrt{d}+1}\\
&\geq \sqrt{d}-2.
\end{aligned}
\end{multline}

\end{document}